\documentclass[11pt]{article}

\usepackage[utf8]{inputenc}
\usepackage[T1]{fontenc}
\usepackage{amsmath,amsthm}
\usepackage{newtxtext}
\usepackage[amssymbols]{newtxmath} 
\usepackage{graphicx}
\usepackage{xcolor}
\usepackage{booktabs}
\usepackage{algorithm}
\usepackage{algpseudocode}
\usepackage[colorlinks=true,linkcolor=blue!70!black,citecolor=blue!70!black,urlcolor=blue!70!black]{hyperref}
\usepackage{cite}
\usepackage[margin=1in]{geometry}
\usepackage{microtype}
\usepackage{caption}
\usepackage{bm}
\usepackage{multirow}
\newtheorem{theorem}{Theorem}[section]
\newtheorem{lemma}[theorem]{Lemma}
\newtheorem{proposition}[theorem]{Proposition}
\newtheorem{corollary}[theorem]{Corollary}
\newtheorem{remark}{Remark}[section]

\graphicspath{{figures/}}

\title{\LARGE Target-Rate Least-Squares Power Allocation\\over Parallel Channels}
\author{Bhaskar Krishnamachari\\[4pt]
{\normalsize Ming Hsieh Department of Electrical and Computer Engineering}\\
{\normalsize University of Southern California}\\[2pt]
{\normalsize \texttt{bkrishna@usc.edu}}}
\date{\normalsize March 2026}

\begin{document}
\maketitle

\begin{abstract}
We study power allocation over $N$ parallel Gaussian channels, such as OFDM subcarriers, when each channel has a desired target spectral efficiency.
Given channel gain-to-noise coefficients $a_i>0$ and per-channel targets $T_i\ge 0$, we minimize the total squared rate deviation $\sum_{i=1}^{N}(\log_2(1+a_iP_i)-T_i)^2$ subject to a sum-power constraint $\sum_i P_i \le P_{\mathrm{tot}}$ and nonnegativity $P_i \ge 0$.
We prove that the optimal allocation never overshoots any target and may leave power unused when all targets are jointly feasible, a structure fundamentally different from classical waterfilling.
Using the KKT conditions, we derive a per-channel closed-form solution in terms of the Lambert~W function on the active set and reduce the remaining computation to a one-dimensional monotone bisection for the dual variable.
The resulting algorithm runs in $O(N\log(1/\varepsilon))$ time and achieves up to 1{,}890$\times$ speedup over general-purpose numerical solvers at $N=1{,}024$ channels.
Numerical experiments over Rayleigh fading channels confirm that the closed-form solution matches numerical optimization to machine precision and demonstrate superior target-tracking performance compared to waterfilling, uniform allocation, and proportional fairness across a range of operating conditions.
\end{abstract}

\noindent\textbf{Keywords:} Power allocation, OFDM, target rate, Lambert~W function, convex optimization, waterfilling, KKT conditions.

\bigskip

\section{Introduction}
\label{sec:intro}

Orthogonal frequency-division multiplexing (OFDM) and its multiple-access variant OFDMA convert frequency-selective wireless channels into collections of parallel narrowband subchannels, enabling flexible per-subcarrier power and rate adaptation~\cite{tse2005fundamentals,goldsmithvaraiya1997}.
The classical approach to power allocation over these parallel channels maximizes the sum rate under a total power budget, yielding the well-known waterfilling policy~\cite{coverandthomas2006,tse2005fundamentals}.
Waterfilling allocates more power to stronger channels and always exhausts the entire budget, a strategy that is optimal when the sole objective is aggregate throughput.

However, many practical wireless systems operate with \emph{target rates} rather than unbounded throughput objectives.
Per-user quality-of-service (QoS) requirements, application-layer demands, scheduler-assigned operating points, and discrete modulation and coding scheme (MCS) levels all impose desired rates on individual subchannels~\cite{wong1999multiuserofdm,huangberry2009chapter}.
Existing formulations address targets primarily through hard constraints, either as minimum rate floors (rate-adaptive allocation) or as exact requirements to be met with minimum power (margin-adaptive allocation)~\cite{chow1995dmt,sadr2009survey}.
Hard constraints, however, are poorly suited to scenarios where targets are \emph{negotiable}: soft QoS preferences, MCS granularity that prevents exact rate satisfaction, or differentiable optimization pipelines that benefit from smooth penalty functions.

We address this gap by studying a \emph{soft target-rate tracking} formulation.
Rather than enforcing rate constraints, we minimize the sum of squared deviations between achieved rates and channel-specific targets across all subchannels.
This least-squares objective penalizes shortfalls smoothly and leads to an allocation that balances power across channels to approach each target as closely as the budget allows.

To the best of our knowledge, the specific combination of (i)~a quadratic rate-deviation objective, (ii)~parallel Gaussian channels with per-channel targets, (iii)~a sum-power constraint, and (iv)~a Lambert~W closed-form solution has not been studied previously.
The key contributions of this work are:
\begin{enumerate}
\item We formulate the target-rate least-squares power allocation problem over $N$ parallel Gaussian channels with per-channel targets and a sum-power constraint (Section~\ref{sec:model}).
\item We prove that optimal rates never exceed their targets (no-overshoot property), use this to establish convexity of the problem on its natural domain, and characterize two operating regimes depending on whether the power budget exceeds the sum of per-channel caps (Section~\ref{sec:properties}).
\item We derive a per-channel closed-form solution via the KKT conditions and the Lambert~W function, reducing the problem to a one-dimensional monotone root-finding problem for the dual variable (Section~\ref{sec:kkt}).
\item We present a bisection algorithm with guaranteed $O(N\log(1/\varepsilon))$ convergence and discuss numerical stability considerations (Section~\ref{sec:algorithm}).
\item We validate the closed-form solution against numerical optimization, compare allocation behavior with classical waterfilling, and demonstrate scalability and statistical performance over fading channels (Section~\ref{sec:results}).
\end{enumerate}

The remainder of this paper is organized as follows.
Section~\ref{sec:related} reviews related work.
Section~\ref{sec:model} presents the system model and problem formulation.
Section~\ref{sec:properties} establishes structural properties and convexity of the problem on its natural domain.
Section~\ref{sec:kkt} derives the Lambert~W closed form via KKT conditions.
Section~\ref{sec:algorithm} describes the bisection algorithm and its convergence.
Section~\ref{sec:results} presents numerical experiments.
Section~\ref{sec:discussion} discusses extensions and connections to other formulations.
Section~\ref{sec:conclusion} concludes the paper.

\section{Related Work}
\label{sec:related}

\subsection{Classical Waterfilling}

The waterfilling solution for sum-rate maximization over parallel Gaussian channels is a foundational result in information theory~\cite{coverandthomas2006,tse2005fundamentals}.
Goldsmith and Varaiya~\cite{goldsmithvaraiya1997} established the capacity of fading channels with channel side information, extending waterfilling to time-varying settings.
The key property of waterfilling is that it always expends the entire power budget and favors channels with higher gains.
Our formulation departs from this philosophy: the objective is not throughput maximization but target tracking, which can result in unused power.

\subsection{QoS-Constrained and Margin-Adaptive Allocation}

Wong~et~al.~\cite{wong1999multiuserofdm} introduced multiuser OFDM with adaptive subcarrier, bit, and power allocation subject to per-user rate and BER requirements.
Chow, Cioffi, and Bingham~\cite{chow1995dmt} developed a practical bit-loading algorithm for discrete multitone (DMT) systems (the precursor to OFDM in DSL) that minimizes power subject to a target bit rate per subchannel.
Their approach, and the large body of DSL/DMT bit-loading work that followed, treats per-channel targets as hard constraints: the allocator either meets each target exactly or declares infeasibility.
Surveys by Sadr~et~al.~\cite{sadr2009survey} and Huang and Berry~\cite{huangberry2009chapter} classify OFDM resource allocation into rate-adaptive (maximize rate under power constraints) and margin-adaptive (minimize power under rate constraints) categories.
Our formulation occupies a distinct niche: rather than imposing hard rate constraints, we penalize deviations from targets quadratically, producing a smooth allocation that degrades gracefully when the budget is insufficient to meet all targets.
This soft-target approach avoids the infeasibility issues inherent in hard-constraint formulations and provides a differentiable objective that is amenable to integration with gradient-based outer-loop optimizers.

\subsection{Fairness and Proportional Allocation}

Kelly~\cite{kelly1998rate} introduced proportional fairness for rate allocation in communication networks, connecting it to a logarithmic utility maximization.
Mo and Walrand~\cite{mo2000fair} unified proportional, max-min, and intermediate notions of fairness through a parameterized utility family.
Shen~et~al.~\cite{shen2005proportional} applied proportional rate constraints to multiuser OFDM, ensuring that users receive rates in specified proportions.
Our target-rate formulation can be viewed as complementary: proportional fairness controls \emph{ratios} between user rates, whereas our approach controls \emph{absolute deviations} from specified targets.
Moreover, proportional fairness maximizes a monotone concave utility of rate and therefore always exhausts the power budget; even with elastic rate constraints, it does not directly minimize deviation from specified per-channel targets.

\subsection{Lambert~W Function in Wireless Optimization}

The Lambert~W function, defined by $W(z)e^{W(z)} = z$~\cite{corless1996lambertw}, arises in several wireless resource allocation problems.
Brah~et~al.~\cite{brah2011lambertw} used it for constrained power allocation in cooperative relay networks.
Bacci~et~al.~\cite{bacci2014energyaware} derived Lambert~W solutions for energy-aware link adaptation in small-cell networks.
Meshkati~et~al.~\cite{meshkati2006gametheoretic} obtained Lambert~W expressions in game-theoretic energy-efficient power control.
In each of these works, the Lambert~W function emerges from the interplay between logarithmic rate functions and power constraints in the KKT conditions.
Our contribution extends this family of results to the least-squares target-rate tracking setting.

\subsection{Positioning of This Work}

It is instructive to contrast our formulation with the two dominant paradigms.
In general \emph{utility-maximization} frameworks~\cite{kim2010sparecapacity}, the objective is a concave, monotonically increasing function of rate (e.g., weighted sum rate, $\alpha$-fair utilities), so that more power always increases utility and the budget is always fully spent.
Our quadratic deviation objective is non-monotone: once a channel reaches its target, additional power is wasteful, leading to the fundamentally different property of slack power.
\emph{Penalty-based QoS} approaches augment throughput objectives with constraint-violation penalties, typically enforced through Lagrangian terms; the penalty serves as a regularizer rather than as the primary objective.
In contrast, our formulation \emph{directly} minimizes the sum of squared deviations as the sole objective, leading to a structurally different KKT system that admits a Lambert~W closed form.
Penalty methods append deviation terms to a monotone utility that still exhausts the budget; because our objective is \emph{solely} the deviation, the resulting allocation exhibits non-monotone, slack-power behavior that is absent from penalty-augmented formulations.
Prior Lambert~W results~\cite{brah2011lambertw,bacci2014energyaware,meshkati2006gametheoretic} address energy efficiency and cooperative relays rather than squared-deviation target tracking over parallel channels.

Table~\ref{tab:positioning} summarizes how our formulation relates to the classical alternatives.
The target-rate least-squares approach occupies the previously unexplored intersection of soft QoS objectives and closed-form Lambert~W solutions for parallel channels.

\begin{table}[t]
\centering
\caption{Comparison of power allocation design philosophies.}
\label{tab:positioning}
\begin{tabular}{lll}
\toprule
Approach & Objective & Power usage \\
\midrule
Waterfilling & Max sum rate & Always full \\
Margin-adaptive & Min power for hard targets & Minimum needed \\
Max-min fairness & Max worst-case rate & Always full \\
\textbf{Target-rate (ours)} & \textbf{Min squared deviation} & \textbf{May be slack} \\
\bottomrule
\end{tabular}
\end{table}

\section{System Model and Problem Formulation}
\label{sec:model}

\subsection{Parallel Gaussian Channel Model}

We consider a single-user system with $N$ parallel Gaussian channels, as arises in OFDM transmission over a frequency-selective fading channel.
Let $P_i \ge 0$ denote the transmit power allocated to channel~$i$, and let $a_i > 0$ denote the effective channel gain-to-noise ratio (incorporating path loss, fading, and noise power density) on channel~$i$.
We assume perfect channel state information (CSI) at the transmitter, so that the $a_i$ are known.
The achievable spectral efficiency\footnote{Throughout this paper, ``rate'' refers to spectral efficiency (bits/s/Hz), i.e., the rate normalized by bandwidth.} on channel~$i$ is given by the Shannon formula
\begin{equation}
r_i(P_i) = \log_2(1 + a_i P_i) \quad \text{(bits/s/Hz)}.
\label{eq:rate}
\end{equation}
This model applies directly to OFDM subcarriers, DMT tones in DSL, or any set of non-interfering parallel links.

\subsection{Optimization Problem}

Each channel~$i$ is assigned a target spectral efficiency $T_i \ge 0$.
We denote the power allocation vector by $\bm{P} = (P_1, \ldots, P_N)$.
Given a total power budget $P_{\mathrm{tot}} > 0$, we seek the allocation that minimizes the aggregate squared deviation from these targets:
\begin{equation}
\begin{aligned}
\min_{\{P_i\}_{i=1}^N}\quad & J(\bm{P}) \triangleq \sum_{i=1}^{N}\Big(\log_2(1+a_iP_i)-T_i\Big)^2 \\
\text{s.t.}\quad & \sum_{i=1}^{N} P_i \le P_{\mathrm{tot}}, \qquad P_i\ge 0\;\;\forall\, i.
\end{aligned}
\label{eq:main}
\end{equation}
The targets $T_i$ may be heterogeneous across channels.
A weighted variant $\sum_i w_i(r_i - T_i)^2$ with importance weights $w_i > 0$ can be treated by the same approach (see Appendix).

\subsection{Discussion of the Objective Structure}

The rate function $r_i(P_i) = \log_2(1 + a_i P_i)$ is concave in $P_i$.
Let $c \triangleq \ln 2$ (used throughout Sections~\ref{sec:model}--\ref{sec:kkt}).
Since each term $f_i(P_i) = (r_i(P_i) - T_i)^2$ composes a convex quadratic with a concave function, convexity of the objective is \emph{not immediate}: the second derivative
\begin{equation}
f_i''(P_i) = \frac{2a_i^2}{c^2(1+a_iP_i)^2}\Big[1 - \ln(1+a_iP_i) + T_i\, c\Big]
\label{eq:secondderiv}
\end{equation}
changes sign when $\ln(1 + a_i P_i) > 1 + T_i\, c$, i.e., when $P_i > (e \cdot 2^{T_i} - 1)/a_i$.
Consequently, $J(\bm{P})$ is \emph{not} globally convex over $P_i \ge 0$.
However, as we establish in Section~\ref{sec:properties}, the optimal solution is confined to a restricted domain on which convexity holds, enabling an exact closed-form solution.

\section{Structural Properties}
\label{sec:properties}

We establish structural results that distinguish the target-rate allocation from classical waterfilling and then use them to prove convexity on the natural domain of the problem.

\begin{theorem}[No Overshoot]
\label{thm:noovershoot}
If $\bm{P}^* = (P_1^*,\ldots,P_N^*)$ is optimal for~\eqref{eq:main}, then $r_i(P_i^*) \le T_i$ for all $i = 1,\ldots,N$.
\end{theorem}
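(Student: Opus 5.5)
We argue by contradiction, using an exchange argument that only ever reduces power. Suppose $\bm{P}^*$ is optimal but some index $j$ has $r_j(P_j^*) > T_j$.

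Since $T_j \ge 0$ and $r_j(0) = 0 \le T_j < r_j(P_j^*)$, the map $P \mapsto r_j(P) = \log_2(1+a_jP)$ is continuous and strictly increasing on $[0,\infty)$. By the intermediate value theorem there is a unique $\hat P_j \in [0, P_j^*)$ with $r_j(\hat P_j) = T_j$; explicitly,
\[
\hat P_j = \frac{2^{T_j}-1}{a_j}.
\]

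Define $\tilde{\bm{P}}$ by $\tilde P_j = \hat P_j$ and $\tilde P_i = P_i^*$ for $i \ne j$. This point is feasible. Nonnegativity holds because $\hat P_j \ge 0$. The budget holds because $\sum_i \tilde P_i = \sum_i P_i^* - (P_j^* - \hat P_j) < \sum_i P_i^* \le P_{\mathrm{tot}}$.

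Only the $j$-th term of the objective changes. It drops from $(r_j(P_j^*) - T_j)^2 > 0$ to $0$. Hence $J(\tilde{\bm{P}}) < J(\bm{P}^*)$, contradicting optimality. Therefore $r_i(P_i^*) \le T_i$ for every $i$.

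The argument uses neither convexity nor KKT conditions, which matters because the paper only establishes convexity afterwards, using this theorem. There is no real obstacle beyond checking two points: that lowering power keeps the sum-power constraint satisfied, and that the strict inequality gives a strict decrease in $J$.

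The proof also yields a stronger fact. Every optimal $P_i^*$ lies in $[0, (2^{T_i}-1)/a_i]$, and $(2^{T_i}-1)/a_i < (e\cdot 2^{T_i}-1)/a_i$. By \eqref{eq:secondderiv}, each $f_i$ is therefore convex on the region containing the optimum, which the subsequent convexity argument can use.
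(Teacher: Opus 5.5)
Your proof is correct and is the same as the paper's. Both argue by contradiction, lower the overshooting channel's power to $(2^{T_j}-1)/a_j$, note that this keeps feasibility while the $j$-th term drops from a positive value to zero and no other term increases, and conclude $J(\tilde{\bm{P}})<J(\bm{P}^*)$. Your explicit checks, that $\hat P_j\ge 0$ follows from $T_j\ge 0$ and that the argument does not rely on convexity or KKT, are accurate additions to the same argument.
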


\begin{proof}
Suppose for contradiction that $r_j(P_j^*) > T_j$ for some channel~$j$.
Consider the modified allocation $P_j' = (2^{T_j} - 1)/a_j < P_j^*$, which achieves $r_j(P_j') = T_j$.
This modification strictly reduces the $j$-th term from $(r_j(P_j^*) - T_j)^2 > 0$ to zero, does not increase any other term, and strictly reduces the total power $\sum_i P_i$.
Since the resulting allocation remains feasible, we have $J(\bm{P}') < J(\bm{P}^*)$, contradicting the optimality of~$\bm{P}^*$.
\end{proof}

Theorem~\ref{thm:noovershoot} implies that the optimal allocation is always in the region where each $r_i \le T_i$, which, as we show next, is precisely the region where the objective is convex.
From a practical standpoint, the no-overshoot property means the system never wastes power pushing a channel beyond its target; excess power on one channel would be better spent reducing the shortfall on another.

\begin{proposition}[Per-Channel Cap]
\label{prop:cap}
The power required to exactly meet the target on channel~$i$ is
\begin{equation}
\bar{P}_i \triangleq \frac{2^{T_i} - 1}{a_i}.
\label{eq:cap}
\end{equation}
By Theorem~\ref{thm:noovershoot}, the optimal allocation satisfies $P_i^* \le \bar{P}_i$ for all~$i$.
\end{proposition}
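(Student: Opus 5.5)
The proposition has two parts, and I would handle them in order. First I establish the formula for $\bar{P}_i$. Then I derive the bound $P_i^*\le\bar{P}_i$ from Theorem~\ref{thm:noovershoot} using monotonicity of the rate function.

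For the formula, I solve $r_i(P)=T_i$, that is, $\log_2(1+a_iP)=T_i$. Exponentiating base $2$ gives $1+a_iP=2^{T_i}$, and since $a_i>0$ this yields $P=(2^{T_i}-1)/a_i$. Because $T_i\ge 0$, we have $2^{T_i}\ge 1$, so $\bar{P}_i\ge 0$ is an admissible power level. Uniqueness follows from the monotonicity established next, so this is \emph{the} power that exactly meets the target.

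For the bound, I note that $r_i'(P)=a_i/\big(c(1+a_iP)\big)>0$ on $P\ge 0$. Hence $r_i$ is strictly increasing and has a strictly increasing inverse $r_i^{-1}(t)=(2^t-1)/a_i$. Let $\bm{P}^*$ be optimal. Theorem~\ref{thm:noovershoot} gives $r_i(P_i^*)\le T_i=r_i(\bar{P}_i)$. Applying the increasing inverse then gives $P_i^*\le\bar{P}_i$. Equivalently, if $P_i^*>\bar{P}_i$, strict monotonicity would force $r_i(P_i^*)>T_i$, which contradicts the theorem.

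There is no real obstacle here. The only points to state explicitly are that $a_i>0$, which makes both the division and the monotonicity valid, and that $T_i\ge0$, which keeps $\bar{P}_i$ inside the feasible region $P_i\ge0$. The degenerate case $T_i=0$ gives $\bar{P}_i=0$, and then $P_i^*=0$ is forced by combining the bound with nonnegativity. The result should also be read as conditional on an optimizer existing. Existence holds because the feasible set is compact and $J$ is continuous, although the proposition only claims the implication.
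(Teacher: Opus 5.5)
Your proposal is correct and takes the same route as the paper: invert $\log_2(1+a_iP)=T_i$ to get $\bar{P}_i$, then read off $P_i^*\le\bar{P}_i$ from Theorem~\ref{thm:noovershoot}. You also spell out the strict monotonicity of $r_i$, which the paper's one-line inference leaves implicit; your side remarks on $T_i=0$ and on existence of an optimizer are accurate but not needed for the stated implication.
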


\subsection{Convexity on the Restricted Domain}

Theorem~\ref{thm:noovershoot} and Proposition~\ref{prop:cap} show that optimal solutions satisfy $0 \le P_i^* \le \bar{P}_i$.
We may therefore add the constraints $P_i \le \bar{P}_i$ to~\eqref{eq:main} without altering the optimal solution.
We now verify that the objective is convex over this restricted domain.

\begin{proposition}[Convexity]
\label{prop:convexity}
Each term $f_i(P_i) = (\log_2(1 + a_i P_i) - T_i)^2$ satisfies $f_i''(P_i) \ge 0$ for all $P_i \in [0, \bar{P}_i]$.
\end{proposition}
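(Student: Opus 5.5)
The plan is to compute $f_i''$ directly, reduce its sign to the sign of a single bracketed factor, and then bound that factor using the cap from Proposition~\ref{prop:cap}. Throughout, write $u = 1 + a_i P_i$ and $c = \ln 2$, so that $r_i(P_i) = \ln u / c$.

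First, verify~\eqref{eq:secondderiv}. We have $r_i'(P_i) = a_i/(c\,u)$ and $r_i''(P_i) = -a_i^2/(c\,u^2)$. Hence
\[
f_i'' = 2\big(r_i'\big)^2 + 2\,(r_i - T_i)\,r_i'' = \frac{2a_i^2}{c^2u^2}\Big[1 - c\,(r_i - T_i)\Big] = \frac{2a_i^2}{c^2u^2}\Big[1 - \ln u + T_i c\Big].
\]
The prefactor $2a_i^2/(c^2u^2)$ is strictly positive because $a_i > 0$ and $u \ge 1$. The sign of $f_i''$ is therefore the sign of the bracket $g(P_i) \triangleq 1 + T_i c - \ln(1 + a_i P_i)$.

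Next, note that $g$ is strictly decreasing in $P_i$. On $[0, \bar P_i]$ its minimum is therefore attained at $P_i = \bar P_i$. There $1 + a_i \bar P_i = 2^{T_i}$, so $\ln(1 + a_i\bar P_i) = T_i c$ and $g(\bar P_i) = 1 > 0$. It follows that $g \ge 1$ on the whole interval, so $f_i'' \ge 2a_i^2/(c^2u^2) > 0$. This gives the claimed convexity, and in fact strict convexity.

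This is routine, and there is no real obstacle. The only point needing care is the derivative bookkeeping, in particular the factor $c$ in $r_i''$ and the resulting sign of the cross term. It is also worth remarking that the bracket stays positive up to $(e\cdot 2^{T_i} - 1)/a_i$, which exceeds $\bar P_i$. The restricted domain thus sits strictly inside the convex region, and this margin is what makes the bound robust.
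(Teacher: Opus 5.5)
Your proof is correct and takes essentially the same approach as the paper: both reduce the sign of $f_i''$ to the bracket $1 + T_i\ln 2 - \ln(1+a_iP_i)$ and show it stays positive up to $\bar P_i$, which the paper phrases as the inflection point $(e\cdot 2^{T_i}-1)/a_i$ exceeding $\bar P_i$. Your direct evaluation $g(\bar P_i)=1$ is a slightly cleaner way to make that comparison, and it also gives strict convexity with an explicit lower bound.
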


\begin{proof}
From~\eqref{eq:secondderiv}, $f_i''(P_i) \ge 0$ if and only if $\ln(1 + a_i P_i) \le 1 + T_i \ln 2$.
The inflection point occurs at $P_i^{\mathrm{inf}} = (e \cdot 2^{T_i} - 1)/a_i$.
Since $e > 2$, we have $e \cdot 2^{T_i} > 2^{T_i+1} > 2^{T_i}$ for all $T_i \ge 0$, and therefore $P_i^{\mathrm{inf}} > \bar{P}_i$.
It follows that $f_i''(P_i) \ge 0$ for all $P_i \in [0, \bar{P}_i]$.
\end{proof}

Since $J(\bm{P})$ is a sum of convex functions over the compact domain $\{0 \le P_i \le \bar{P}_i,\; \sum_i P_i \le P_{\mathrm{tot}}\}$ with linear constraints, the augmented problem is a convex program.
By Slater's condition (the origin $\bm{P} = \bm{0}$ is strictly feasible), strong duality holds and the KKT conditions are both necessary and sufficient for global optimality~\cite{boyd2004convex}.
Furthermore, because the additional constraints $P_i \le \bar{P}_i$ are inactive at the optimum (by Theorem~\ref{thm:noovershoot}, the optimal rates satisfy $r_i^* \le T_i$ but generically $r_i^* < T_i$ in Case~B), the dual variables associated with these upper bounds are zero, and the KKT conditions for the original problem~\eqref{eq:main} coincide with those of the augmented problem.

\begin{theorem}[Regime Characterization]
\label{thm:regimes}
The optimal solution exhibits two distinct regimes:
\begin{itemize}
\item \textbf{Case~A} (sufficient power): If $P_{\mathrm{tot}} \ge \sum_{i=1}^{N} \bar{P}_i$, then $P_i^* = \bar{P}_i$ for all~$i$, the objective attains $J^* = 0$, and the power constraint is \emph{slack} with $\sum_i P_i^* = \sum_i \bar{P}_i < P_{\mathrm{tot}}$.
\item \textbf{Case~B} (insufficient power): If $P_{\mathrm{tot}} < \sum_{i=1}^{N} \bar{P}_i$, then $\sum_i P_i^* = P_{\mathrm{tot}}$ (the power constraint is tight), $J^* > 0$, and the optimal dual variable $\lambda^* > 0$.
\end{itemize}
\end{theorem}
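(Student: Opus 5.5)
The proof splits into the two regimes, and in both cases rests on Theorem~\ref{thm:noovershoot} and Proposition~\ref{prop:cap}. These confine every optimum to the box $0 \le P_i \le \bar{P}_i$. On that box each term $f_i$ is nonincreasing in $P_i$, since $f_i'(P_i) = \frac{2a_i}{c(1+a_iP_i)}(r_i(P_i)-T_i) \le 0$ there. It is strictly decreasing on $[0,\bar{P}_i)$ whenever $\bar{P}_i>0$, i.e., whenever $T_i>0$.

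\textbf{Case~A.} The allocation $P_i=\bar{P}_i$ is feasible, because $\sum_i \bar{P}_i \le P_{\mathrm{tot}}$, and it gives $J=0$. Since $J\ge 0$, this allocation is optimal with $J^*=0$. For uniqueness, note that any optimum must also have $J=0$, which forces $r_i(P_i^*)=T_i$ for every $i$. Because $r_i$ is strictly increasing, this gives $P_i^*=\bar{P}_i$. The statement that the constraint is slack with $\sum_i \bar{P}_i < P_{\mathrm{tot}}$ holds only under strict inequality. I would note that when $P_{\mathrm{tot}} = \sum_i \bar{P}_i$ the constraint is met with equality, and read the strict claim as referring to the strict case.

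\textbf{Case~B, $J^*>0$.} Suppose $J^*=0$. Then every $r_i(P_i^*)=T_i$, so $P_i^*=\bar{P}_i$ and $\sum_i P_i^* = \sum_i \bar{P}_i > P_{\mathrm{tot}}$, which is infeasible. Hence $J^*>0$. An optimum exists because the feasible set is compact and $J$ is continuous.

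\textbf{Case~B, tightness.} This is the main step. Suppose $\sum_i P_i^* < P_{\mathrm{tot}}$. Since $J^*>0$, some channel $j$ has $r_j(P_j^*) \ne T_j$, and by Theorem~\ref{thm:noovershoot} this means $r_j(P_j^*) < T_j$, i.e., $P_j^* < \bar{P}_j$. Increase $P_j$ by $\delta = \min\{P_{\mathrm{tot}} - \sum_i P_i^*,\ \bar{P}_j - P_j^*\} > 0$. The new allocation remains feasible, and $f_j$ strictly decreases because $f_j' < 0$ on $[P_j^*, \bar{P}_j)$; indeed, the squared shortfall shrinks as $r_j$ rises monotonically toward $T_j$. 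This contradicts optimality, so $\sum_i P_i^* = P_{\mathrm{tot}}$.

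\textbf{Case~B, $\lambda^*>0$.} I would use the KKT stationarity condition from the convex reformulation (justified just before this theorem): $f_i'(P_i^*) + \lambda^* - \mu_i = 0$, with $\mu_i \ge 0$ and $\mu_i P_i^* = 0$. Take a channel $j$ with $P_j^*<\bar{P}_j$. If $P_j^*>0$, then $\mu_j=0$ and $\lambda^* = -f_j'(P_j^*) > 0$. If every channel with shortfall has $P_j^*=0$, then the budget, which is tight and positive, is spent entirely on channels with $P_i^*=\bar{P}_i>0$, where $f_i'=0$. These channels have $\mu_i=0$, which would force $\lambda^*=0$. But shifting a small amount of power from such a channel $i$ to a shortfall channel $j$ changes $J$ at first order by $f_j'(0) - f_i'(\bar{P}_i) = f_j'(0) < 0$, contradicting optimality. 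Hence some shortfall channel has $P_j^*>0$, and $\lambda^*>0$. This perturbation argument is the delicate point; the rest is bookkeeping.
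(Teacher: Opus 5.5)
Your proof is correct, but for Case~B it takes a different route from the paper. The paper argues entirely through the dual. If the budget were slack, complementary slackness would give $\lambda^*=0$. With $\lambda^*=0$ each term is minimized at its cap, forcing $P_i^*=\bar P_i$ for all $i$ and hence infeasibility. Read carefully, this rules out $\lambda^*=0$ directly, so $\lambda^*>0$ and tightness both come out of that one step.

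You instead prove tightness primally, by pouring leftover budget into a shortfall channel. This needs only Theorem~\ref{thm:noovershoot} and the sign of $f_j'$. It does not use the KKT machinery, and in particular not the paper's rather loose claim that the cap multipliers vanish. You also supply things the paper's proof omits:
\begin{itemize}
\item uniqueness of the cap allocation in Case~A;
\item an explicit argument for $J^*>0$;
\item the correct observation that the asserted strict inequality $\sum_i\bar P_i<P_{\mathrm{tot}}$ fails when $P_{\mathrm{tot}}=\sum_i\bar P_i$. The paper's own proof writes $\le$ and still calls the constraint slack.
\end{itemize}

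The paper's route is cleaner for $\lambda^*>0$. Suppose $\lambda^*=0$ and take any shortfall channel $j$. One exists since $J^*>0$, and its cap multiplier is zero because $P_j^*<\bar P_j$. Stationarity then gives $\mu_j=f_j'(P_j^*)<0$, contradicting $\mu_j\ge 0$. That one line replaces your case split and the power-shifting perturbation you flagged as the delicate point. It is precisely the content of the paper's ``$\lambda^*=0$ forces the cap allocation'' step.
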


\begin{proof}
For Case~A, setting $P_i = \bar{P}_i$ achieves $r_i = T_i$ for all~$i$, giving $J = 0$.
Since $J \ge 0$ by definition, this is globally optimal.
The total power used is $\sum_i \bar{P}_i \le P_{\mathrm{tot}}$, so the constraint is slack.

For Case~B, if $\sum_i P_i^* < P_{\mathrm{tot}}$, then $\lambda^* = 0$ by complementary slackness.
With $\lambda^* = 0$, the unconstrained optimum of each term $(r_i - T_i)^2$ is $P_i = \bar{P}_i$, giving total power $\sum_i \bar{P}_i > P_{\mathrm{tot}}$, which is a contradiction.
Hence the constraint must be tight.
\end{proof}

\begin{remark}
The slack-power regime (Case~A) is a distinctive feature of target-rate allocation.
In contrast, classical waterfilling for sum-rate maximization always uses the entire budget, since additional power always increases the sum rate.
\end{remark}

\begin{corollary}[Monotonicity]
\label{cor:monotone}
The optimal objective $J^*(P_{\mathrm{tot}})$ is nonincreasing in $P_{\mathrm{tot}}$, reaching zero at $P_{\mathrm{tot}} = \sum_i \bar{P}_i$ and remaining zero for all larger budgets.
Similarly, for fixed $P_{\mathrm{tot}}$, $J^*$ is nondecreasing in each target $T_i$.
\end{corollary}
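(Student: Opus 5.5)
The plan is to argue entirely through feasible-set inclusion, which avoids the KKT machinery. Write $\mathcal{F}(P_{\mathrm{tot}}) = \{\bm{P} \ge \bm{0} : \sum_i P_i \le P_{\mathrm{tot}}\}$ and $J^*(P_{\mathrm{tot}}) = \min_{\bm{P}\in\mathcal{F}(P_{\mathrm{tot}})} J(\bm{P})$. The minimum is attained: by Theorem~\ref{thm:noovershoot} and Proposition~\ref{prop:cap} we may restrict to the compact set $\mathcal{F}(P_{\mathrm{tot}})\cap\{P_i\le\bar P_i\}$, on which $J$ is continuous.

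For monotonicity in the budget, let $P_{\mathrm{tot}} \le P'_{\mathrm{tot}}$. Then $\mathcal{F}(P_{\mathrm{tot}}) \subseteq \mathcal{F}(P'_{\mathrm{tot}})$. Minimizing the same function over a larger set cannot increase the minimum, so $J^*(P'_{\mathrm{tot}}) \le J^*(P_{\mathrm{tot}})$. The zero level set then follows from Theorem~\ref{thm:regimes}:
\begin{itemize}
\item If $P_{\mathrm{tot}} \ge \sum_i \bar{P}_i$ (Case~A), then $J^*=0$. This covers the threshold $P_{\mathrm{tot}}=\sum_i\bar P_i$ and every larger budget.
\item If $P_{\mathrm{tot}} < \sum_i \bar{P}_i$ (Case~B), then $J^*>0$. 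So $\sum_i\bar P_i$ is exactly where $J^*$ first reaches zero.
\end{itemize}

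Monotonicity in a target $T_j$ is the less trivial step, because the objective changes rather than the feasible set. I would fix $P_{\mathrm{tot}}$, take $T_j' > T_j$ with the other targets held fixed, and let $\bm{P}'$ be optimal for the targets $\bm{T}'$. By Theorem~\ref{thm:noovershoot} applied to the problem with targets $\bm{T}'$, we have $r_j(P'_j) \le T'_j$. There are two cases.
\begin{itemize}
\item \emph{Case $r_j(P_j') \ge T_j$.} Then $\bar{P}_j \le P_j'$. Lower $P_j'$ to $\bar P_j$ and keep the other coordinates. The resulting allocation $\tilde{\bm{P}}$ is feasible, and its $j$-th term under targets $\bm{T}$ is zero. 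So $J^*(\bm{T}) \le J(\tilde{\bm P};\bm T) \le J(\bm{P}';\bm{T}')$.
\item \emph{Case $r_j(P'_j) < T_j$.} Then $0 < T_j - r_j(P'_j) < T'_j - r_j(P'_j)$, so the $j$-th term strictly decreases when $T'_j$ is replaced by $T_j$. The other terms are unchanged. Hence $J^*(\bm{T}) \le J(\bm{P}';\bm{T}) \le J(\bm{P}';\bm{T}') = J^*(\bm{T}')$.
\end{itemize}

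The main obstacle is that the squared deviation is not monotone in $T_j$ for a fixed allocation. Raising $T_j$ can shrink the term if the channel overshoots. The no-overshoot property, together with the option of lowering $P_j$ to the cap in the first case, is what rules this out.
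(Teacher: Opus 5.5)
Your proof is correct. The paper gives no separate proof of this corollary; it simply places it after Theorem~\ref{thm:regimes} as a consequence. Your budget argument (feasible-set inclusion, then Cases~A and~B for the zero level set) is exactly the reasoning the paper leaves implicit.

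The real added value is in the target part, which the paper does not justify at all. You correctly identified that $(r_j-T_j)^2$ is not monotone in $T_j$ for a fixed allocation. Your case split closes this without any convexity or KKT input: you apply no-overshoot to the $\bm T'$-optimal allocation, and you truncate $P'_j$ to $\bar P_j$ when channel~$j$ already meets the smaller target.

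The same idea can be packaged in one line. The truncation step in the proof of Theorem~\ref{thm:noovershoot} shows $J^*(\bm T)=\min_{\bm P\in\mathcal F(P_{\mathrm{tot}})}\sum_i\bigl((T_i-r_i(P_i))_+\bigr)^2$. This integrand is pointwise nondecreasing in each $T_i$, and the feasible set does not depend on $\bm T$.

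Two small points. First, attainment does not need Theorem~\ref{thm:noovershoot}, because $\mathcal F(P_{\mathrm{tot}})$ is already a compact simplex on which $J$ is continuous. Citing the theorem there is also mildly circular, since it presupposes that an optimizer exists. Second, the Case~B claim $J^*>0$ that you import is immediate directly: $J=0$ forces $P_i=\bar P_i$ for every $i$, which exceeds the budget.
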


\section{KKT Solution via Lambert~W}
\label{sec:kkt}

In this section, we derive the closed-form per-channel power allocation using the KKT optimality conditions and the Lambert~W function.

\subsection{KKT Conditions}

Let $c = \ln 2$.
We form the Lagrangian
\begin{equation}
\mathcal{L}(\bm{P},\lambda,\bm{\mu}) = \sum_{i=1}^{N}(r_i - T_i)^2 + \lambda\Big(\sum_{i=1}^{N} P_i - P_{\mathrm{tot}}\Big) - \sum_{i=1}^{N} \mu_i P_i
\end{equation}
with dual variables $\lambda \ge 0$ (power constraint) and $\mu_i \ge 0$ (nonnegativity).
As established in Section~\ref{sec:properties}, the problem is convex on the restricted domain $0 \le P_i \le \bar{P}_i$ and Slater's condition holds, so the KKT conditions are necessary and sufficient for optimality~\cite{boyd2004convex}.
Stationarity with respect to $P_i$ yields
\begin{equation}
\frac{\partial \mathcal{L}}{\partial P_i} = 2(r_i - T_i)\frac{a_i}{(1+a_iP_i)\,c} + \lambda - \mu_i = 0.
\label{eq:stationary}
\end{equation}
The complementary slackness conditions are $\lambda(\sum_i P_i - P_{\mathrm{tot}}) = 0$ and $\mu_i P_i = 0$ for all~$i$.

\subsection{Active-Set Derivation}

For an active channel ($P_i > 0$), we have $\mu_i = 0$ and~\eqref{eq:stationary} becomes
\begin{equation}
2\Big(\log_2(1+a_iP_i) - T_i\Big)\frac{a_i}{(1+a_iP_i)\,c} + \lambda = 0.
\label{eq:active}
\end{equation}
Since $r_i \le T_i$ by Theorem~\ref{thm:noovershoot}, the first term is nonpositive, which is consistent with $\lambda \ge 0$.
Substituting $y_i = 1 + a_i P_i > 1$ and using $\log_2 y_i = \ln y_i / c$, equation~\eqref{eq:active} becomes
\begin{equation}
\ln y_i = T_i\, c - \frac{\lambda\, c^2}{2\, a_i}\, y_i.
\label{eq:lny}
\end{equation}
Exponentiating both sides yields
\begin{equation}
y_i \, e^{\frac{\lambda c^2}{2a_i}\, y_i} = 2^{T_i}.
\label{eq:yexp}
\end{equation}
We define $\alpha_i \triangleq \lambda c^2 / (2a_i) \ge 0$ and multiply~\eqref{eq:yexp} by $\alpha_i$:
\begin{equation}
(\alpha_i y_i)\, e^{\alpha_i y_i} = \alpha_i\, 2^{T_i}.
\end{equation}
This is in the canonical form $z\,e^z = w$ with $z = \alpha_i y_i$ and $w = \alpha_i\, 2^{T_i} \ge 0$.
Since $w \ge 0$, the principal branch $W_0$ of the Lambert~W function~\cite{corless1996lambertw}, the unique real-valued branch satisfying $W_0(w) \ge -1$ for $w \ge -1/e$, yields $\alpha_i y_i = W_0(\alpha_i\, 2^{T_i})$, and hence
\begin{equation}
\boxed{
P_i(\lambda) = \frac{2}{\lambda\, c^2}\, W_0\!\left(\frac{\lambda\, c^2}{2\, a_i}\, 2^{T_i}\right) - \frac{1}{a_i}.
}
\label{eq:lambert}
\end{equation}
This is the closed-form power allocation for each active channel as a function of the dual variable~$\lambda$.
From a practical standpoint, equation~\eqref{eq:lambert} provides a direct mapping from channel parameters $(a_i, T_i)$ and the single scalar~$\lambda$ to the optimal per-channel power, requiring only one Lambert~W evaluation per channel, and no iterative per-channel optimization is needed.
Note that $c^2 = (\ln 2)^2$ appears because we use $\log_2$ in the rate expression; for natural logarithm rates, this simplifies to $c = 1$.

\subsection{Inactivity Threshold}

When $P_i = 0$, we have $y_i = 1$ and $r_i = 0$.
The stationarity condition~\eqref{eq:stationary} gives $\mu_i = \lambda - 2a_i T_i / c$.
Dual feasibility $\mu_i \ge 0$ yields the inactivity condition:
\begin{equation}
P_i^* = 0 \quad\text{if}\quad \lambda \ge \frac{2\,a_i\,T_i}{c}.
\label{eq:threshold}
\end{equation}
Channels with small $a_i T_i$ products (poor gain and/or low target) are shut off first as $\lambda$ increases.
Combining~\eqref{eq:lambert} and~\eqref{eq:threshold} with the cap from Proposition~\ref{prop:cap}:
\begin{equation}
P_i^*(\lambda) = \min\Big\{\bar{P}_i,\;\max\big\{0,\; P_i(\lambda)\big\}\Big\}.
\label{eq:final}
\end{equation}
This expression clips each channel's allocation to the feasible range $[0, \bar{P}_i]$: channels with sufficient power reach their caps, channels with limited power receive the Lambert~W allocation, and channels below the activity threshold~\eqref{eq:threshold} are shut off entirely.
The upper clamp $\bar{P}_i$ enforces the no-overshoot property (Theorem~\ref{thm:noovershoot}), ensuring no channel receives more power than needed to meet its target.

\subsection{Monotonicity and Limiting Behavior}

\begin{lemma}[Monotonicity of $P_i(\lambda)$]
\label{lem:monotone}
For each channel~$i$ with $a_i, T_i > 0$, the unclamped allocation $P_i(\lambda)$ in~\eqref{eq:lambert} is strictly decreasing in $\lambda$ for $\lambda > 0$.
\end{lemma}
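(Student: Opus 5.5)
Since $P_i(\lambda) = y_i(\lambda)/a_i - 1/a_i$, where $y_i(\lambda)$ is the quantity defined in \eqref{eq:yexp}, strict monotonicity of $P_i$ in $\lambda$ is the same as strict monotonicity of $y_i$. We therefore work directly with the implicit relation \eqref{eq:yexp}, or equivalently \eqref{eq:lny}, rather than differentiating the Lambert~W expression \eqref{eq:lambert}.

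Define
\[
g_i(y,\lambda) \triangleq \ln y + \frac{\lambda c^2}{2a_i}\, y - T_i c
\]
for $y>0$ and $\lambda>0$. Then $y_i(\lambda)$ is the unique root of $g_i(\cdot,\lambda)=0$. Uniqueness holds because $\partial g_i/\partial y = 1/y + \lambda c^2/(2a_i) > 0$, while $g_i \to -\infty$ as $y\to 0^+$ and $g_i\to+\infty$ as $y\to\infty$. This root coincides with the $W_0$ formula, since $W_0$ is the unique real branch for nonnegative argument.

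By the implicit function theorem, which applies because $g_i$ is smooth and $\partial g_i/\partial y > 0$,
\[
y_i'(\lambda) = -\frac{\partial g_i/\partial \lambda}{\partial g_i/\partial y} = -\frac{c^2 y_i/(2a_i)}{1/y_i + \lambda c^2/(2a_i)} < 0,
\]
because $y_i>0$. Hence $P_i'(\lambda) = y_i'(\lambda)/a_i < 0$ for every $\lambda>0$, which is the claimed strict decrease.

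As a cross-check, or as an alternative derivative-free argument, take $0<\lambda_1<\lambda_2$. Since $g_i$ is strictly increasing in $\lambda$ for fixed $y>0$,
\[
g_i(y_i(\lambda_1),\lambda_2) > g_i(y_i(\lambda_1),\lambda_1) = 0.
\]
Because $g_i(\cdot,\lambda_2)$ is strictly increasing in $y$, its root must lie strictly to the left, so $y_i(\lambda_2) < y_i(\lambda_1)$.

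The only delicate point is confirming that the root given by \eqref{eq:lambert} is the correct one, i.e., that the principal branch picks out the unique positive solution. This follows because $W_0$ maps $[0,\infty)$ onto $[0,\infty)$. The hypothesis $T_i>0$ is not essential to the sign of the derivative; it only guarantees that $y_i(\lambda)>1$ for small $\lambda$, so the unclamped formula is meaningful on the relevant range.
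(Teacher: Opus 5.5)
Your proof is correct, but it takes a different route from the paper's. The paper stays with the closed form~\eqref{eq:lambert}. It writes $P_i$ through $W_0(w)$ with $w=\alpha_i 2^{T_i}$, invokes $W_0'(w)=W_0(w)/[w(1+W_0(w))]$, and leaves the sign to ``a direct computation''; its closing clause restates the conclusion rather than proving it. Carried out, that computation gives $\frac{d}{d\alpha_i}\bigl[W_0(w)/\alpha_i\bigr]=-W_0(w)^2/[\alpha_i^2(1+W_0(w))]$, and hence
\[
\frac{dP_i}{d\lambda}=-\frac{c^2\,W_0(w)^2}{2a_i^2\,\alpha_i^2\,\bigl(1+W_0(w)\bigr)}<0 .
\]
This is exactly your $y_i'(\lambda)/a_i$ after substituting $y_i=W_0(w)/\alpha_i$. (The paper's proof writes $P_i=W_0(w)/\alpha_i-1/a_i$, which drops a factor $1/a_i$ on the first term. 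Your normalization $P_i=(y_i-1)/a_i$ is the correct one, and the slip does not affect the sign.)

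You instead return to the scalar equation $\ln y+\alpha_i y=T_i c$ and use only that its left side is strictly increasing in both $y$ and $\lambda$. This avoids special-function calculus and makes the sign evident. Your comparison argument does not even need differentiability of the root. The cost is the step identifying the $W_0$ expression as the unique positive root, which you handle correctly.

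The paper's route applies directly to the formula as stated in the lemma and gives the derivative in terms of $W_0$. Since the $W_0'$ identity is itself obtained by implicitly differentiating $We^{W}=w$, the two computations are really the same calculation in different coordinates. Your remark that $T_i>0$ is not needed for the sign is also correct.
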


\begin{proof}
Let $w = \alpha_i\, 2^{T_i}$ where $\alpha_i = \lambda c^2 / (2a_i)$.
Then $P_i(\lambda) = W_0(w)/\alpha_i - 1/a_i$.
Using the identity $W_0'(w) = W_0(w)/[w(1 + W_0(w))]$, a direct computation shows $dP_i/d\lambda < 0$ for $\lambda > 0$, since both $W_0(w)$ and $\alpha_i$ increase with $\lambda$ but $P_i$ decreases.
\end{proof}

\begin{lemma}[Limiting Behavior]
\label{lem:limits}
As $\lambda \to 0^+$, $P_i(\lambda) \to \bar{P}_i = (2^{T_i}-1)/a_i$ (the cap).
As $\lambda \to \infty$, the unclamped expression~\eqref{eq:lambert} tends to $-1/a_i$, so that the clamped allocation $P_i^*(\lambda) \to 0$.
\end{lemma}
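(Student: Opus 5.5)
Both limits follow from rewriting the closed form~\eqref{eq:lambert} in terms of the Lambert~W argument $w = \alpha_i 2^{T_i}$, with $\alpha_i = \lambda c^2/(2a_i)$. Then
\[
P_i(\lambda) = \frac{1}{a_i}\left(2^{T_i}\,\frac{W_0(w)}{w} - 1\right).
\]
As $\lambda$ ranges over $(0,\infty)$, $w$ ranges monotonically over $(0,\infty)$ for fixed $a_i$ and $T_i$. So both limits reduce to the behaviour of the single function $g(w) = W_0(w)/w$ as $w\to 0^+$ and as $w\to\infty$.

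\emph{Limit $\lambda\to 0^+$.} Here I use that $W_0$ is analytic at the origin with $W_0(0)=0$ and $W_0'(0)=1$; the latter follows from the derivative identity used in Lemma~\ref{lem:monotone}, or directly from differentiating $W e^{W}=w$ at $w=0$. Hence $W_0(w) = w + O(w^2)$ and $g(w)\to 1$. This gives $P_i(\lambda)\to (2^{T_i}-1)/a_i = \bar P_i$. As a cross-check, the unique root of~\eqref{eq:lny} at $\lambda=0$ is $y_i = 2^{T_i}$, which is consistent.

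\emph{Limit $\lambda\to\infty$.} Here I show $g(w)\to 0$ as $w\to\infty$. Set $u = W_0(w)$, which tends to $\infty$ monotonically. Since $w = u e^{u}$, we get $g(w) = e^{-u} \to 0$. Therefore $P_i(\lambda)\to -1/a_i$.

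For the clamped allocation~\eqref{eq:final}, the unclamped value is negative for all sufficiently large $\lambda$, so $\max\{0,P_i(\lambda)\}=0$ eventually and $P_i^*(\lambda)\to 0$. A quantitative version also follows from~\eqref{eq:threshold}: once $\lambda \ge 2a_iT_i/c$, the clamp returns exactly $0$. I would add one line checking consistency. At $\lambda = 2a_iT_i/c$ we have $y_i=1$, and~\eqref{eq:lny} then reads $0 = T_i c - \lambda c^2/(2a_i)$, which holds, so $P_i(\lambda)=0$ exactly there. Lemma~\ref{lem:monotone} then gives $P_i(\lambda) < 0$ beyond that point.

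No step is a real obstacle. The only care needed is justifying $W_0(w)/w\to 1$ at $0$, which I would argue via the implicit-function derivative rather than quoting a series. The degenerate case $T_i=0$ also needs a remark: then $\bar P_i=0$, and the formula gives $P_i(\lambda)=(W_0(w)/w - 1)/a_i \le 0$, so both claims still hold.
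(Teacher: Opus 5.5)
Your proposal is correct and follows essentially the same route as the paper: rewrite~\eqref{eq:lambert} in terms of $w=\alpha_i 2^{T_i}$, use $W_0(w)/w\to 1$ as $w\to 0^+$ and $W_0(w)/w\to 0$ as $w\to\infty$, and then pass through the clamp. Your justifications are slightly tighter than the paper's sketch: you use $W_0'(0)=1$, and you use the identity $W_0(w)/w=e^{-W_0(w)}$ where the paper uses $W_0(w)\sim\ln w$. Your added observations (that $P_i(\lambda)=0$ exactly at $\lambda=2a_iT_i/c$, and the $T_i=0$ case) are welcome but do not change the argument.
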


\begin{proof}
As $\lambda \to 0^+$, the argument $w = \alpha_i 2^{T_i} \to 0$.
Using $W_0(w) \approx w$ for small $w$, we get $P_i \to 2^{T_i}/a_i - 1/a_i = (2^{T_i}-1)/a_i = \bar{P}_i$.
As $\lambda \to \infty$, $w \to \infty$ and $W_0(w) \sim \ln w$, so $W_0(w)/w \to 0$.
Then $y_i = W_0(w)/\alpha_i \to 0$, and the unclamped $P_i(\lambda) = (y_i - 1)/a_i \to -1/a_i$.
Since $P_i^*(\lambda) = \max\{0, P_i(\lambda)\}$, the clamped allocation tends to zero.
\end{proof}

\subsection{Special Cases}

\textbf{Single channel ($N=1$):}
If $P_{\mathrm{tot}} \ge \bar{P}_1$, then $P_1^* = \bar{P}_1$ and $J^* = 0$.
Otherwise, $P_1^* = P_{\mathrm{tot}}$ (use all available power on the single channel).

\textbf{Equal gains ($a_i = a$ for all $i$):}
With uniform targets $T_i = T$, symmetry gives $P_i^* = \min(P_{\mathrm{tot}}/N,\, \bar{P})$ for all~$i$, where $\bar{P} = (2^T - 1)/a$.
The allocation is uniform, contrasting with waterfilling (also uniform in this case but using all power).

\textbf{Target $T_i \to 0$:}
As $T_i \to 0$, the cap $\bar{P}_i \to 0$ and channel~$i$ receives vanishing power.

\textbf{Target $T_i \to \infty$:}
As $T_i \to \infty$, the cap $\bar{P}_i \to \infty$ and the channel demands arbitrarily large power.
In practice, the sum-power constraint limits the achievable rate.

\section{Algorithm}
\label{sec:algorithm}

In this section, we present the bisection algorithm for finding the optimal dual variable and analyze its convergence and numerical properties.

\subsection{Bisection Procedure}

By Lemma~\ref{lem:monotone}, the total allocated power $S(\lambda) \triangleq \sum_{i=1}^{N} P_i^*(\lambda)$ is nonincreasing in~$\lambda$.
By Lemma~\ref{lem:limits}, $S(0^+) = \sum_i \bar{P}_i$ and $S(\infty) = 0$.
In Case~B ($P_{\mathrm{tot}} < \sum_i \bar{P}_i$), the optimal $\lambda^*$ is the unique root of $S(\lambda^*) = P_{\mathrm{tot}}$, which can be found via bisection on~$\lambda$.
Algorithm~\ref{alg:bisection} presents the complete procedure.

\begin{algorithm}[t]
\caption{Target-rate least-squares power allocation via bisection}
\label{alg:bisection}
\begin{algorithmic}[1]
\Require Channel gains $\{a_i\}_{i=1}^N$, targets $\{T_i\}_{i=1}^N$, budget $P_{\mathrm{tot}}$, tolerance $\varepsilon$
\State Compute caps $\bar{P}_i \leftarrow (2^{T_i}-1)/a_i$ for all $i$
\If{$P_{\mathrm{tot}} \ge \sum_i \bar{P}_i$}
    \State \Return $P_i^* = \bar{P}_i$ for all $i$; \quad $\lambda^* = 0$
\EndIf
\State $\lambda_{\mathrm{lo}} \leftarrow 0$
\State $\lambda_{\mathrm{hi}} \leftarrow 1$
\While{$S(\lambda_{\mathrm{hi}}) > P_{\mathrm{tot}}$}
    \State $\lambda_{\mathrm{hi}} \leftarrow 2\,\lambda_{\mathrm{hi}}$
\EndWhile
\Repeat
    \State $\lambda \leftarrow (\lambda_{\mathrm{lo}} + \lambda_{\mathrm{hi}})/2$
    \For{$i = 1$ \textbf{to} $N$}
        \State Compute $P_i^*(\lambda)$ via~\eqref{eq:lambert} and~\eqref{eq:final}
    \EndFor
    \State $S(\lambda) \leftarrow \sum_{i=1}^{N} P_i^*(\lambda)$
    \If{$S(\lambda) > P_{\mathrm{tot}}$}
        \State $\lambda_{\mathrm{lo}} \leftarrow \lambda$
    \Else
        \State $\lambda_{\mathrm{hi}} \leftarrow \lambda$
    \EndIf
\Until{$|S(\lambda) - P_{\mathrm{tot}}| < \varepsilon$}
\State \Return $\{P_i^*(\lambda)\}_{i=1}^N$ and $\lambda$
\end{algorithmic}
\end{algorithm}

\subsection{Convergence Analysis}

\begin{theorem}[Convergence]
\label{thm:convergence}
Under Case~B ($P_{\mathrm{tot}} < \sum_i \bar{P}_i$) with at least one channel satisfying $a_i, T_i > 0$, Algorithm~\ref{alg:bisection} terminates in at most $O(\log(1/\varepsilon))$ bisection iterations.
Each iteration evaluates $N$ Lambert~W function calls and elementary operations, giving a total complexity of $O(N \log(1/\varepsilon))$.
\end{theorem}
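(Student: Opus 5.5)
The proof splits into three parts: the bracketing (doubling) phase terminates, the bisection phase terminates in logarithmically many steps, and each step costs $O(N)$.

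\emph{Properties of $S$.} First I establish the properties of $S(\lambda)=\sum_i P_i^*(\lambda)$ that the argument needs.
\begin{itemize}
\item \textbf{Monotonicity.} By Lemma~\ref{lem:monotone}, each unclamped $P_i(\lambda)$ is strictly decreasing, and clamping to $[0,\bar P_i]$ preserves monotonicity. Hence $S$ is nonincreasing.
\item \textbf{Continuity.} $W_0$ is continuous on $[0,\infty)$ and $\min/\max$ preserve continuity, so $S$ is continuous.
\item \textbf{Strict decrease.} Every channel with $a_i,T_i>0$ has $\bar P_i>0$. On $(0,2a_iT_i/c)$ its clamped allocation is strictly decreasing, because $P_i(\lambda)<\bar P_i$ there: $P_i(0^+)=\bar P_i$ and $P_i$ is strictly decreasing. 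So $S$ is strictly decreasing on $(0,\lambda_{\max})$, where $\lambda_{\max}=\max_i 2a_iT_i/c$, and $S\equiv 0$ beyond $\lambda_{\max}$ by~\eqref{eq:threshold}.
\item \textbf{Unique root.} By Lemma~\ref{lem:limits}, $S(0^+)=\sum_i\bar P_i>P_{\mathrm{tot}}>0=S(\infty)$, so the intermediate value theorem gives a root $\lambda^*\in(0,\lambda_{\max})$, and strict decrease makes it unique.
\end{itemize}

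\emph{Doubling phase.} Since $S(\lambda)=0<P_{\mathrm{tot}}$ for $\lambda\ge\lambda_{\max}$, the loop that doubles $\lambda_{\mathrm{hi}}$ stops after at most $\lceil\log_2\max(1,\lambda_{\max})\rceil+1$ steps. This count depends only on the problem data, not on $\varepsilon$. Afterwards the invariant $S(\lambda_{\mathrm{lo}})\ge P_{\mathrm{tot}}\ge S(\lambda_{\mathrm{hi}})$ holds, with $S(0)$ interpreted as $S(0^+)$. Each bisection update preserves this invariant, so $\lambda^*\in[\lambda_{\mathrm{lo}},\lambda_{\mathrm{hi}}]$ throughout, and after $k$ steps the interval length is $\lambda_{\mathrm{hi}}^{(0)}2^{-k}$.

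\emph{Bisection phase (main obstacle).} The stopping rule is on $|S(\lambda)-P_{\mathrm{tot}}|$, not on the interval width, so I need a Lipschitz bound on $S$ near $\lambda^*$. On any compact subinterval of $(0,\infty)$ each $P_i(\lambda)$ is smooth, since $W_0$ is analytic on $(0,\infty)$. Differentiating $P_i$ using $W_0'(w)=W_0/[w(1+W_0)]$ gives the explicit bound
$$|dP_i/d\lambda| \le y_i/(\lambda(1+W_0))\le \bar P_i+1/a_i$$
divided by $\lambda$. Clamping only reduces the derivative, so $S$ is Lipschitz with some constant $L$ on $[\lambda^*/2,\lambda_{\mathrm{hi}}^{(0)}]$.

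The remaining concern is the left end of the interval near $0$, where this derivative bound blows up. After $O(1)$ iterations, a number depending on the data through $\lambda^*$ and $\lambda_{\mathrm{hi}}^{(0)}$, the left endpoint either exceeds $\lambda^*/2$ or the midpoints already lie in $[\lambda^*/2,\lambda_{\mathrm{hi}}^{(0)}]$. From then on, $|S(\lambda)-P_{\mathrm{tot}}|\le L|\lambda-\lambda^*|\le L\lambda_{\mathrm{hi}}^{(0)}2^{-k}$, which falls below $\varepsilon$ once $k\ge\log_2(L\lambda_{\mathrm{hi}}^{(0)}/\varepsilon)=O(\log(1/\varepsilon))$, treating the problem data as constants.

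\emph{Per-iteration cost.} Each iteration evaluates~\eqref{eq:lambert} and~\eqref{eq:final} once per channel, which is $N$ Lambert~W calls, each done to machine precision in $O(1)$ by a fixed number of Halley or Newton steps. Summing these costs $O(N)$ per iteration. Adding the $O(N)$ cap computation and the $O(1)$ in $\varepsilon$ doubling phase gives total complexity $O(N\log(1/\varepsilon))$.
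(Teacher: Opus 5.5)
Your proposal is correct and follows the paper's route: doubling to bracket $\lambda^*$, halving the interval, Lipschitz continuity of $S$ near $\lambda^*$ to turn interval width into the $|S(\lambda)-P_{\mathrm{tot}}|<\varepsilon$ test, and $O(N)$ work per iteration. It also fills in details the paper only asserts, namely uniqueness of the root, an explicit bound on the doubling phase, and an explicit Lipschitz constant. One small fix: the derivative is $dP_i/d\lambda=-\frac{y_i}{a_i\lambda}\frac{W_0}{1+W_0}=-\frac{c^2 4^{T_i}}{2a_i^2}\frac{e^{-2W_0}}{1+W_0}$, so your intermediate expression is off, although your final bound $(\bar P_i+1/a_i)/\lambda$ still holds; in fact $|dP_i/d\lambda|\le c^2 4^{T_i}/(2a_i^2)$ uniformly in $\lambda>0$, so $S$ is globally Lipschitz and your separate treatment of $\lambda$ near $0$ is unnecessary.
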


\begin{proof}
The doubling loop (lines~6--8) finds an upper bound $\lambda_{\mathrm{hi}}$ in at most $O(\log(\lambda^*))$ steps.
Each bisection iteration halves the interval $[\lambda_{\mathrm{lo}}, \lambda_{\mathrm{hi}}]$.
After $m$ iterations, the interval width is $\lambda_{\mathrm{hi}}^{(0)} / 2^m$.
Since $S(\lambda)$ is Lipschitz continuous near $\lambda^*$, the error $|S(\lambda) - P_{\mathrm{tot}}|$ decreases proportionally, reaching $\varepsilon$ in $O(\log(1/\varepsilon))$ iterations.
Each iteration computes~\eqref{eq:lambert} for all $N$ channels in $O(N)$ time.
\end{proof}

\subsection{Numerical Stability}

For very small $\lambda$, the expression~\eqref{eq:lambert} involves the ratio $W_0(w)/\alpha_i$ where both $W_0(w) \to 0$ and $\alpha_i \to 0$.
Using the asymptotic expansion $W_0(w) \approx w - w^2 + \cdots$ for small $w$, we obtain
\begin{equation}
P_i(\lambda) \approx \frac{2^{T_i} - 1}{a_i} - \frac{\lambda\, c^2\, 2^{2T_i}}{2\,a_i^2} + O(\lambda^2),
\end{equation}
confirming the smooth transition to the cap solution as $\lambda \to 0$.
In practice, we handle $\lambda \approx 0$ by directly checking whether $P_{\mathrm{tot}} \ge \sum_i \bar{P}_i$ (line~2 of Algorithm~\ref{alg:bisection}) and returning the cap solution, bypassing the Lambert~W evaluation entirely.

\subsection{Inactive Channel Pruning}

By the inactivity threshold~\eqref{eq:threshold}, any channel satisfying $\lambda \ge 2a_i T_i / c$ receives zero power.
During the bisection, once $\lambda_{\mathrm{lo}}$ exceeds the threshold for channel~$i$, that channel can be removed from further Lambert~W evaluations for all subsequent iterations, reducing the per-iteration cost.
In practice, channels with small $a_i T_i$ products are pruned early, and the effective per-iteration cost decreases as the bisection progresses.

\subsection{Warm-Starting for Time-Varying Channels}

In adaptive OFDM systems where channel gains change over time, the optimal $\lambda^*$ at time~$t$ is typically close to $\lambda^*$ at time~$t-1$.
We can warm-start the bisection by initializing $[\lambda_{\mathrm{lo}}, \lambda_{\mathrm{hi}}]$ around the previous solution, significantly reducing the number of iterations needed when the channel varies slowly.

\section{Numerical Results}
\label{sec:results}

We validate the analytical results and compare target-rate allocation with classical waterfilling through numerical experiments.
All simulations are implemented in Python~3.11.
The Lambert~W evaluations use \texttt{scipy.special.lambertw} from SciPy~1.12~\cite{scipy_lambertw}, with bisection tolerance $\varepsilon = 10^{-10}$.
The numerical baseline uses the Sequential Least-Squares Programming (SLSQP) solver from \texttt{scipy.optimize.minimize} with an analytical gradient and constraint tolerance $10^{-12}$.
Timing measurements are averaged over five independent runs on a single core of an AMD Ryzen~9 5900X processor and exclude initial import overhead.

\subsection{Setup}

Unless otherwise stated, we use $N = 8$ channels with deterministic gains $a_i \in \{20, 15, 10, 7, 5, 3, 2, 1\}$ (modeling a mix of strong and weak subcarriers), uniform target $T = 3$~bits/s/Hz, and total power budget $P_{\mathrm{tot}} = 10$.
For statistical experiments, we use Rayleigh fading with mean SNR of 10~dB.

\subsection{Validation Against Numerical Optimization}

Fig.~\ref{fig:validation} compares the objective value obtained by the Lambert~W closed form (Algorithm~\ref{alg:bisection}) with the result from a general-purpose SLSQP nonlinear optimizer, sweeping $P_{\mathrm{tot}}$ from 0.5 to 25.
The two curves overlap to plotting accuracy across the entire range, confirming the correctness of the closed-form solution.
The objective decreases monotonically with $P_{\mathrm{tot}}$ and reaches zero at the threshold $\sum_i \bar{P}_i \approx 16.8$.

\begin{figure}[t]
\centering
\includegraphics[width=0.85\textwidth]{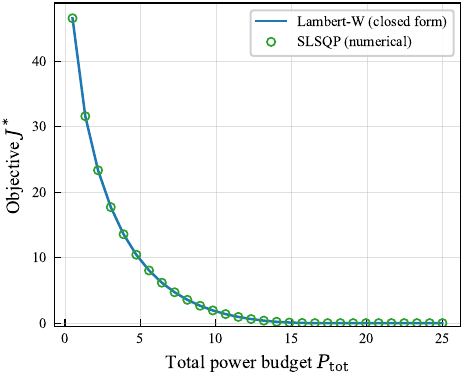}
\caption{Objective $J^*$ vs.\ total power budget $P_{\mathrm{tot}}$, comparing the Lambert~W closed form with SLSQP numerical optimization.
The two solutions agree to machine precision.}
\label{fig:validation}
\end{figure}

\subsection{Allocation Comparison}

Fig.~\ref{fig:allocation} shows the per-channel power allocation for target-rate and waterfilling approaches.
Waterfilling favors the strongest channels and always uses the full budget.
In contrast, the target-rate allocation distributes power to bring each channel as close to the target as possible: weaker channels receive proportionally more power relative to their caps, and no channel exceeds its cap~$\bar{P}_i$.

\begin{figure}[t]
\centering
\includegraphics[width=0.85\textwidth]{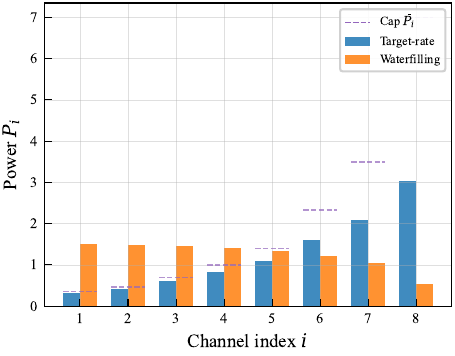}
\caption{Per-channel power allocation for $N = 8$ channels.
The target-rate allocation respects per-channel caps (dashed lines) and distributes power to minimize rate deviations.
Waterfilling concentrates power on the strongest channels.}
\label{fig:allocation}
\end{figure}

Fig.~\ref{fig:rates} illustrates the no-overshoot property (Theorem~\ref{thm:noovershoot}).
Under target-rate allocation, every achieved rate satisfies $r_i \le T$ (the dashed target line).
In contrast, waterfilling pushes strong channels well above the target while leaving weak channels far below.

\begin{figure}[t]
\centering
\includegraphics[width=0.85\textwidth]{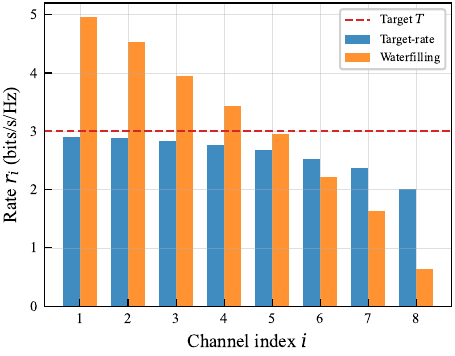}
\caption{Achieved rates per channel.
Target-rate allocation never exceeds the target $T = 3$ (no overshoot).
Waterfilling overshoots on strong channels and falls short on weak ones.}
\label{fig:rates}
\end{figure}

\subsection{Dual Variable Search}

Fig.~\ref{fig:dual} visualizes the monotonically decreasing function $S(\lambda)$.
The optimal $\lambda^*$ is found at the intersection with $P_{\mathrm{tot}} = 10$, confirming the bisection approach of Algorithm~\ref{alg:bisection}.

\begin{figure}[t]
\centering
\includegraphics[width=0.85\textwidth]{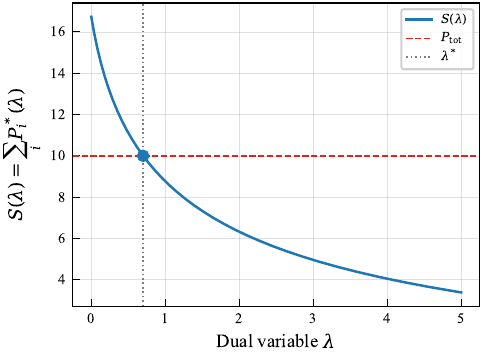}
\caption{Total power $S(\lambda) = \sum_i P_i^*(\lambda)$ as a function of the dual variable~$\lambda$.
The optimal $\lambda^*$ satisfies $S(\lambda^*) = P_{\mathrm{tot}}$.
Monotonicity guarantees unique convergence via bisection.}
\label{fig:dual}
\end{figure}

\subsection{Objective Landscape and Phase Transition}

Fig.~\ref{fig:landscape} shows how the optimal objective $J^*$ varies with the power budget.
A clear phase transition occurs at $P_{\mathrm{tot}} = \sum_i \bar{P}_i \approx 16.8$: below this threshold, the objective decreases steeply as more power becomes available; above it, $J^* = 0$ and additional power goes unused.
This illustrates the regime characterization of Theorem~\ref{thm:regimes}.

\begin{figure}[t]
\centering
\includegraphics[width=0.85\textwidth]{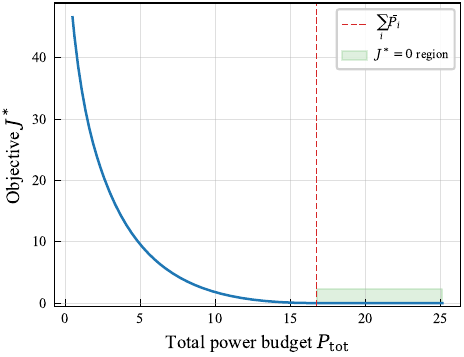}
\caption{Optimal objective vs.\ power budget.
The objective reaches zero at $\sum_i \bar{P}_i \approx 16.8$ (dashed line), beyond which power goes unused.}
\label{fig:landscape}
\end{figure}

\subsection{Heterogeneous Targets}

Fig.~\ref{fig:hetero} demonstrates allocation with heterogeneous targets $T_i \in \{5, 4, 3, 3, 2, 2, 1, 1\}$ under two power budgets that illustrate the two regimes of Theorem~\ref{thm:regimes}.
With these targets and gains, the sum of per-channel caps is $\sum_i \bar{P}_i \approx 7.35$.

The left column shows $P_{\mathrm{tot}} = 5$ (Case~B: tight budget).
Since $5 < 7.35$, the power constraint is active and the entire budget is consumed.
Channels with higher targets receive more power, but weaker channels with high targets show the largest rate shortfalls, a natural consequence of the least-squares objective.
This pattern arises because weak channels require exponentially more power per unit of rate; when the budget is tight, the objective preferentially reduces deviations where the marginal improvement is greatest.

The right column shows $P_{\mathrm{tot}} = 15$ (Case~A: surplus budget).
Since $15 > 7.35$, the allocator sets each $P_i^* = \bar{P}_i$, achieving $r_i = T_i$ for every channel with $J^* = 0$.
The total power used is only 7.35 out of 15, leaving 7.65 unused, a direct illustration of the slack-power property that distinguishes target-rate allocation from waterfilling.

\begin{figure}[t]
\centering
\includegraphics[width=\textwidth]{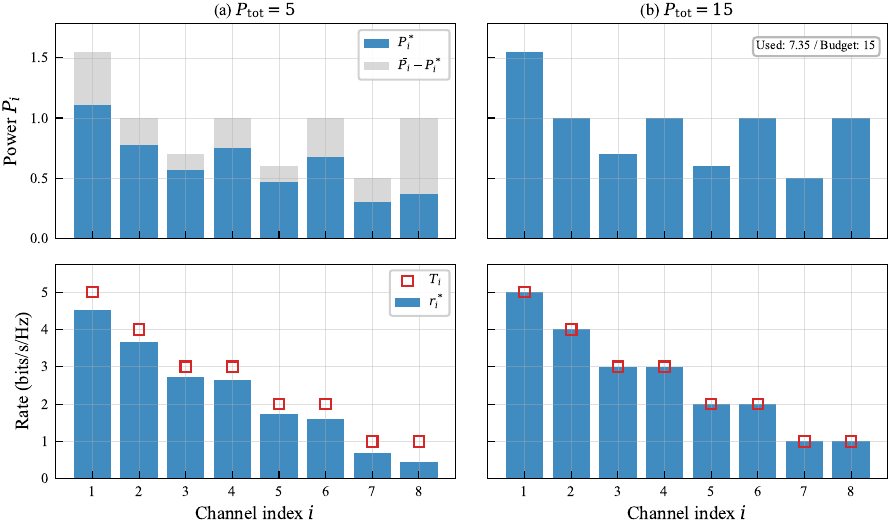}
\caption{Heterogeneous targets $T_i \in \{5,4,3,3,2,2,1,1\}$.
Left ($P_{\mathrm{tot}} = 5$, Case~B): tight budget forces shortfalls on weak, high-target channels.
Right ($P_{\mathrm{tot}} = 15$, Case~A): all targets are met and 7.65 out of 15 power units go unused.
Top row: power allocation (blue) and gap to cap (gray).
Bottom row: achieved rates (blue bars) vs.\ targets (red squares).}
\label{fig:hetero}
\end{figure}

\subsection{Statistical Performance over Fading Channels}

Fig.~\ref{fig:cdf} presents the empirical CDF of per-channel rate deviations $|r_i - T|$ over 1{,}000 independent Rayleigh fading realizations for four allocation strategies: target-rate, waterfilling, uniform, and proportional fairness.
The target-rate allocation consistently achieves smaller deviations: its CDF rises steeply near zero, indicating that most channels are close to the target.
Quantitatively, the target-rate allocation achieves a median deviation of 0.16~bits/s/Hz with a 90th percentile of 1.11, compared to 1.18 (median) and 2.73 (90th percentile) for waterfilling, 1.00 and 2.27 for uniform allocation, and 0.80 and 1.89 for proportional fairness.
While proportional fairness outperforms waterfilling and uniform allocation, it still exhibits substantially larger deviations than the target-rate approach because it optimizes for rate ratios rather than absolute target tracking.

\begin{figure}[t]
\centering
\includegraphics[width=0.85\textwidth]{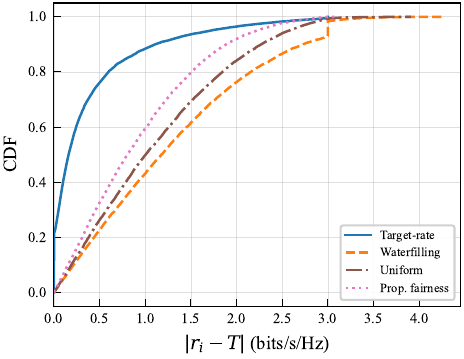}
\caption{Empirical CDF of $|r_i - T|$ over 1{,}000 Rayleigh fading realizations ($N = 8$, $T = 3$, $P_{\mathrm{tot}} = 10$) for four allocation strategies.
Target-rate allocation concentrates deviations near zero.}
\label{fig:cdf}
\end{figure}

\subsection{Computational Scalability}

Fig.~\ref{fig:timing} and Table~\ref{tab:computation} compare the computation time of the Lambert~W bisection algorithm with a general-purpose SLSQP solver for $N$ up to 1{,}024.
The closed-form approach scales nearly linearly in~$N$, consistent with the $O(N\log(1/\varepsilon))$ complexity of Theorem~\ref{thm:convergence}.
At $N = 1{,}024$, the Lambert~W algorithm is approximately 1{,}890$\times$ faster than SLSQP (0.011\,s vs.\ 21.2\,s), making it suitable for real-time OFDM systems with large numbers of subcarriers.
Table~\ref{tab:computation} also reports standard deviations across runs, confirming low timing variability.

\begin{figure}[t]
\centering
\includegraphics[width=0.85\textwidth]{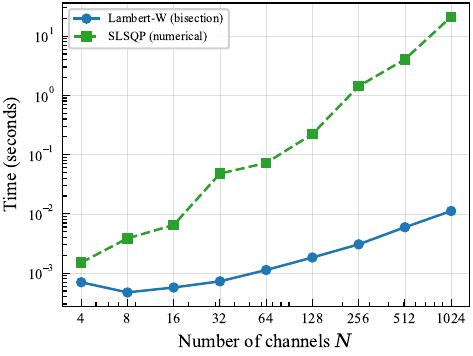}
\caption{Computation time vs.\ number of channels on a log-log scale.
The Lambert~W bisection scales nearly linearly, achieving up to 1{,}890$\times$ speedup over SLSQP at $N = 1{,}024$.}
\label{fig:timing}
\end{figure}

\begin{table}[t]
\centering
\caption{Power usage and target-rate objective $J$ for four allocation strategies. Target-rate allocation leaves power unused when $P_{\mathrm{tot}} \ge \sum_i \bar{P}_i = 16.8$.}
\label{tab:power_usage}
\begin{tabular}{rrrrrrrrr}
\toprule
$P_{\mathrm{tot}}$ & \multicolumn{2}{c}{Target-Rate} & \multicolumn{2}{c}{Waterfilling} & \multicolumn{2}{c}{Uniform} & \multicolumn{2}{c}{Prop.\ Fairness} \\
\cmidrule(lr){2-3} \cmidrule(lr){4-5} \cmidrule(lr){6-7} \cmidrule(lr){8-9}
 & Used & $J$ & Used & $J$ & Used & $J$ & Used & $J$ \\
\midrule
5 & 5.00 & 9.593 & 5.00 & 18.747 & 5.00 & 12.790 & 5.00 & 10.646 \\
10 & 10.00 & 1.789 & 10.00 & 15.383 & 10.00 & 10.600 & 10.00 & 6.532 \\
15 & 15.00 & 0.079 & 15.00 & 17.605 & 15.00 & 13.946 & 15.00 & 8.984 \\
20 & 16.75 & 0.000 & 20.00 & 21.751 & 20.00 & 18.807 & 20.00 & 13.362 \\
25 & 16.75 & 0.000 & 25.00 & 26.582 & 25.00 & 24.125 & 25.00 & 18.396 \\
\bottomrule
\end{tabular}
\end{table}

\begin{table}[t]
\centering
\caption{Computation time (seconds) and speedup of Lambert-W bisection vs.\ SLSQP. Mean $\pm$ std over 5 runs.}
\label{tab:computation}
\begin{tabular}{rrrr}
\toprule
$N$ & Lambert-W & SLSQP & Speedup \\
\midrule
4 & 0.0007 $\pm$ 0.0003 & 0.0015 $\pm$ 0.0002 & 2$\times$ \\
8 & 0.0005 $\pm$ 0.0000 & 0.0039 $\pm$ 0.0002 & 8$\times$ \\
16 & 0.0006 $\pm$ 0.0000 & 0.0065 $\pm$ 0.0002 & 11$\times$ \\
32 & 0.0007 $\pm$ 0.0000 & 0.0478 $\pm$ 0.0063 & 66$\times$ \\
64 & 0.0011 $\pm$ 0.0000 & 0.0723 $\pm$ 0.0118 & 64$\times$ \\
128 & 0.0018 $\pm$ 0.0001 & 0.2221 $\pm$ 0.0039 & 121$\times$ \\
256 & 0.0031 $\pm$ 0.0001 & 1.4235 $\pm$ 0.6380 & 466$\times$ \\
512 & 0.0059 $\pm$ 0.0003 & 4.0471 $\pm$ 1.0514 & 681$\times$ \\
1024 & 0.0112 $\pm$ 0.0001 & 21.1669 $\pm$ 1.1776 & 1896$\times$ \\
\bottomrule
\end{tabular}
\end{table}

\subsection{Sensitivity to Operating SNR}

Fig.~\ref{fig:sensitivity} shows the mean objective $\bar{J}$ as a function of mean channel SNR for all four allocation strategies, averaged over 500 Rayleigh fading realizations per SNR point ($N = 8$, $T = 3$, $P_{\mathrm{tot}} = 10$).
The target-rate allocation maintains the lowest deviation across the entire SNR range and approaches zero at high SNR as more channels become individually feasible.
In contrast, waterfilling deviation \emph{grows} with SNR because stronger channels overshoot the target by increasingly large margins; uniform and proportional fairness allocations exhibit similar growth, though at somewhat lower levels.
This confirms that the target-rate objective is fundamentally better matched to target-tracking applications across a wide range of operating conditions.

\begin{figure}[t]
\centering
\includegraphics[width=0.85\textwidth]{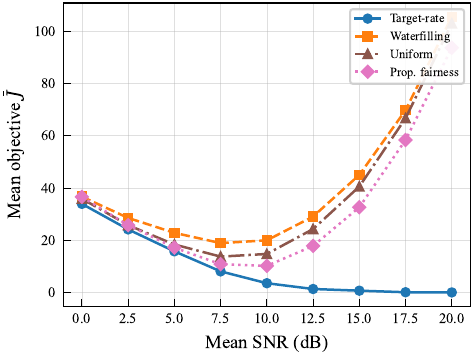}
\caption{Mean objective $\bar{J}$ vs.\ mean SNR over 500 Rayleigh realizations per point ($N = 8$, $T = 3$, $P_{\mathrm{tot}} = 10$).
Target-rate allocation maintains lower deviations across the SNR range.}
\label{fig:sensitivity}
\end{figure}

\subsection{Power Usage Comparison}

Table~\ref{tab:power_usage} quantifies the power usage and target-rate objective $J$ for all four allocation strategies across a range of budgets.
The target-rate allocation achieves strictly lower objective values than all three baselines for every budget level.
Proportional fairness outperforms both waterfilling and uniform allocation but still incurs substantially higher deviations than the target-rate approach.
At $P_{\mathrm{tot}} = 20$ and above, the target-rate allocation leaves power unused after meeting all targets ($J^* = 0$), while all other strategies continue to expend the full budget with increasing rate deviations.

We note that when the design goal is to maximize aggregate throughput rather than track targets, waterfilling remains the optimal strategy.
The target-rate allocation is not a replacement for waterfilling, uniform, or proportional fairness but rather an alternative suited to a different objective.
In the experiments above, waterfilling achieves higher sum rates than all other approaches at every budget level, as expected, since it directly optimizes for throughput.

\section{Discussion}
\label{sec:discussion}

We now discuss connections to related formulations, potential extensions, and limitations of the proposed approach.

\subsection{Connection to Inverse Waterfilling}

The classical waterfilling solution $P_i = (\nu - 1/a_i)_+$ allocates power proportionally to channel quality.
Our target-rate allocation can be viewed as an ``inverse'' philosophy: it allocates power to equalize rate shortfalls rather than to maximize throughput.
The mathematical structure is also inverted: where waterfilling involves a simple linear-plus-threshold formula, our solution requires a nonlinear Lambert~W evaluation, reflecting the more complex relationship between power and quadratic rate deviation.

\subsection{Multi-User Extension}

While we have focused on a single-user setting, the formulation extends naturally to multi-user OFDMA.
In that setting, each user~$u$ has targets $T_{u,i}$ for assigned subcarriers, and the allocator must additionally decide the subcarrier-to-user mapping.
The per-user subproblems retain the structure of~\eqref{eq:main} and admit the same Lambert~W closed form.
Joint subcarrier assignment and power allocation would require combinatorial optimization (e.g., matching or greedy assignment), with the target-rate power allocation as an inner routine.

\subsection{Integration with OFDM Schedulers}

In practical OFDM systems, schedulers assign target rates based on traffic demands, buffer states, and QoS classes.
The target-rate allocation provides a natural interface: the scheduler specifies per-subcarrier targets, and the allocator returns powers that best match them.
The smooth quadratic penalty provides meaningful gradient information for higher-level optimization loops, making the formulation compatible with differentiable scheduling pipelines.

\subsection{Limitations}

Our analysis assumes perfect CSI, flat fading per subchannel, and Gaussian signaling.
In practice, imperfect CSI introduces estimation errors in the $a_i$ coefficients, which could be addressed by robust optimization variants.
The Shannon rate formula serves as an upper bound; practical systems operating with finite block lengths exhibit a rate penalty that grows with decreasing blocklength~\cite{coverandthomas2006}, and the rate function would need adjustment (e.g., via the normal approximation) to capture this effect.
Regarding computational cost, standard implementations evaluate $W_0$ in constant time via Halley's method with cubic convergence~\cite{corless1996lambertw}, making the per-channel overhead negligible even for embedded systems.
We do not address the problem of selecting the targets $T_i$ themselves, which depends on application requirements and higher-layer protocols.

\section{Conclusion}
\label{sec:conclusion}

We have formulated and solved the target-rate least-squares power allocation problem for parallel Gaussian channels.
The optimal allocation exhibits a no-overshoot property, never pushing any channel beyond its target, and may leave power unused when the budget suffices to meet all targets, which is fundamentally different behavior from classical waterfilling.
We have derived a per-channel closed-form solution via the KKT conditions and the Lambert~W function, and we have shown that the optimal dual variable can be found by bisection in $O(N\log(1/\varepsilon))$ time.
Numerical experiments confirm that the closed-form solution matches numerical optimization to machine precision and achieves up to 1{,}890$\times$ speedup over general-purpose solvers at $N = 1{,}024$ subcarriers.

Several directions for future work merit investigation.
First, extending the formulation to multi-user OFDMA with joint subcarrier assignment and target-rate power allocation would broaden the practical applicability.
Second, incorporating robustness to imperfect CSI through stochastic or worst-case formulations would enhance the engineering relevance.
Third, integrating the target-rate allocator with adaptive MCS selection and link adaptation would bridge the gap between Shannon-rate analysis and practical coded systems.
Fourth, exploring connections to rate-distortion theory, where the rate-deviation objective can be interpreted as a distortion measure, may yield additional analytical insights.
Fifth, the differentiable nature of the quadratic objective makes it a natural building block for end-to-end learning pipelines, where target selection and power allocation are jointly optimized via gradient-based methods.
Finally, the closed-form solution is well suited to emerging 6G systems with AI-native resource allocation, where fast analytical solvers can serve as differentiable layers within deep learning architectures.

\appendix
\section{Weighted Variant}
\label{app:weighted}

Consider the weighted objective $J_w(\bm{P}) = \sum_{i=1}^{N} w_i (r_i - T_i)^2$ with $w_i > 0$.
The KKT analysis proceeds identically, with the stationarity condition becoming
\begin{equation}
2\,w_i(r_i - T_i)\frac{a_i}{(1+a_iP_i)\,c} + \lambda - \mu_i = 0.
\end{equation}
Following the same derivation, the active-set solution is
\begin{equation}
P_i(\lambda) = \frac{2\,w_i}{\lambda\, c^2}\, W_0\!\left(\frac{\lambda\, c^2}{2\,w_i\, a_i}\, 2^{T_i}\right) - \frac{1}{a_i},
\end{equation}
and the bisection algorithm applies unchanged.
The weights $w_i$ allow the designer to prioritize certain channels' target tracking over others.

\section*{Acknowledgement}

The author acknowledges the use of Claude (Anthropic) as an AI coding and writing assistant during the preparation of this manuscript.

\bibliographystyle{IEEEtran}
\bibliography{refs}

\end{document}